\documentclass[11pt]{article}
\usepackage[letterpaper, left=1.25in, right=1.25in, top=1.25in, bottom=1.25in]{geometry}
\usepackage[rm,bf]{titlesec}
\usepackage{enumitem}
\usepackage{array}

\usepackage{amsmath}
\usepackage{amsthm}
\usepackage{multirow}
\usepackage{booktabs}

\usepackage{graphicx}
\usepackage{amssymb}
\usepackage{amsfonts}
\usepackage{amsmath,amsthm}
\usepackage[footnotesize,bf]{caption}
\usepackage{natbib}
\usepackage{setspace}
\usepackage[dvipsnames,usenames]{color}
\usepackage{pstricks,pstricks-add}
\usepackage{url}
\usepackage[usenames]{color}
\usepackage[breaklinks, colorlinks, citecolor=blue, linkcolor=blue, urlcolor=blue]{hyperref}
\usepackage{lscape}
\usepackage{pst-solides3d}

\newtheorem{theorem}{Theorem}

\newtheorem{theorem*}[theorem]{Theorem*}

\newtheorem{definition}{Definition}
\newtheorem{definition*}[definition]{Definition*}

\newtheorem{example}{Example}
\newtheorem{example*}[example]{Example*}

\newtheorem{lemma}{Lemma}
\newtheorem{lemma*}[lemma]{Lemma*}

\newtheorem{proposition}{Proposition}
\newtheorem{proposition*}[proposition]{Proposition*}

\newcommand{\Ncal}{\mathcal{N}}
\newcommand{\Mcal}{\mathcal{M}}
\newcommand{\Bcal}{\mathcal{B}}

\newcommand{\N}{\mathbb{N}}
\newcommand{\R}{\mathbb{R}}
\newcommand{\U}{\mathbb{U}}
\newcommand{\M}{\mathbb{M}}
\newcommand{\Ob}{\mathbb{O}}

\newcommand{\Phen}{\mathbb{P}}

\newcommand{\Nash}{\textrm{N}}
\newcommand{\BNE}{\textrm{BNE}}

\newpsobject{grilla}{psgrid}{subgriddiv=1,griddots=10,gridlabels=6pt}

\begin{document}
\bibliographystyle{elsart-harv}
\title{The paradox of monotone structural QRE\footnote{Thanks to the audience at UT Austin Theory seminar for useful comments. All errors are our own.}}
\date{\today}

\author{Rodrigo A. Velez\thanks{
\href{mailto:rvelezca@tamu.edu}{rvelezca@tamu.edu}; \href{https://sites.google.com/site/rodrigoavelezswebpage/home}{https://sites.google.com/site/rodrigoavelezswebpage/home}}\ \ and Alexander L. Brown\thanks{
 \href{mailto:alexbrown@tamu.edu}{alexbrown@tamu.edu}; \href{http://people.tamu.edu/\%7Ealexbrown}{http://people.tamu.edu/$\sim$alexbrown}} \\\small{\textit{Department of
Economics, Texas A\&M University, College Station, TX 77843}}}
\maketitle

\begin{abstract}
\cite{mckelvey:95geb}'s monotone structural Quantal Response Equilibrium theory may be misspecified for the study of monotone behavior.
\medskip
\begin{singlespace}

\medskip

\textit{JEL classification}: C72, D47, D91.
\medskip

\textit{Keywords}: payoff monotonicity; randomly disturbed payoff models; structural quantal response equilibrium; model specification; falsifiability.
\end{singlespace}
\end{abstract}

\section{Introduction}\label{Sec:intro}

The structural Quantal Response Equilibrium (sQRE) of \cite{mckelvey:95geb} is one of the most popular theories for the analysis of data from laboratory experiments \citep{Goeree-Holt-Palfrey-2016-Book}. This theory differs from the basic Nash equilibrium theory in which it assumes that there are \textit{separable} unobservable perturbations to agents' payoffs, which induce agents to behave as if they were noisy best responders.

sQRE theory has been subject to criticism. If its unobservables are unrestricted in the domain in which they are defined, for each possible distribution on observables there are unobservables that generate it \citep{Haile-et-al-2008}. As a response to this, \citet{Goeree-Holt-Palfrey-2005-EE} proposed two solutions. The first is to discipline sQRE with an a priori property of behavior for which there is empirical support, \textit{payoff monotonicity}. This property requires frequencies of play of each agent to be ordinally equivalent with expected utility. The second is to replace the structure of unobservables in this theory with a reduced form noisy best response, which also guarantees payoff monotonicity, the regular QRE (rQRE) theory.

\citet{Goeree-Holt-Palfrey-2005-EE} argue that the separability of perturbations in sQRE induces restrictions on behavior that may not be plausible when comparing behavior across games. Because of this they advocate rQRE in these situations. They do not have conclusive recommendations for the analysis of data from games in which unobservables are plausible to be comparable. For instance, for different experimental sessions with the same payoff function when subjects come from the same population. Indeed, they pose as an open question whether in this situation payoff monotone sQRE and rQRE have the same empirical content \citep[Sec. 6.1.,][]{Goeree-Holt-Palfrey-2005-EE}.

\citet{Haile-et-al-2008}'s criticism of sQRE and  \citet{Goeree-Holt-Palfrey-2005-EE}'s response to it epitomize the conflict between falsifiability and specification of a theory.  We submit that when facing this trade off, one can gain a better understanding of the problem by comparing the empirical content of a falsifiable theory with the ``bare'' empirical content of the a priori restrictions used to make the model falsifiable. That is, one should check whether all behavior satisfying these a priori restrictions can be generated by the falsifiable model. If this is so, one is sure that as long as these restrictions are supported by data, the model is well specified (under the implicit hypothesis that the structure of the unobservables is the correct one). If this is not so, either one should be able to confirm that empirical data can be characterized by stronger properties, or the model may be misspecified and its structure of unobservables needs to be revised.

This paper advances this study for sQRE when restricted by payoff monotonicity. We show that whenever at least an agent has at least three actions available, there can be payoff monotone distributions of observables that \emph{cannot} be generated by any  sQRE model consistent with payoff monotonicity (Theorem~\ref{Thm:Paradox}). This paradoxical situation is actually not resolved if one also disciplines this theory with proximity to Nash behavior. That is, for a given action space in which at least an agent has at least three actions available, one can always construct a payoff matrix for which the union of the range of all payoff monotone sQRE models excludes an open ball centered in a Nash equilibrium that is the limit of payoff monotone behavior (Theorem~\ref{Thm:Paradox2}).

All in all, our results alert researchers in experimental economics about the possible misspecification of monotone sQRE in situations in which there was no evidence of any issues with this theory. As a byproduct, we answer \citet{Goeree-Holt-Palfrey-2005-EE}'s standing question about the relationship of the empirical content of monotone sQRE and rQRE.  Since a Nash equilibrium that can be approximated by payoff monotone behavior can always be approximated by rQRE  \citep{Velez-Brown-2018-EE}, Theorem~\ref{Thm:Paradox2} implies that the game-wise empirical content of monotone sQRE and rQRE may differ when at least an agent has at least three actions available.

The remainder of the paper proceeds as follows. Sec.~\ref{Sec:Model} introduces definitions. Sec.~\ref{Sec-Results} presents our results. Sec.~\ref{Sec-Discussion} concludes and details the relationship of our work with that of \citet{Goeree-Holt-Palfrey-2005-EE} and \citet{Haile-et-al-2008}; discusses the implications of our results for the related randomly disturbed payoffs models of \citet{Harsanyi-1973-IJGT}; and finally discusses the relevance of our results for the refinement of Nash equilibrium.

\section{Definitions}\label{Sec:Model}

\subsection{Falsifiability and specification of a theory}

Our primitive is a set of random variables whose realizations are observable and of interest to a researcher.  Let~$\Ob$ be the set of \emph{all} joint distributions of observable random variables. A theory is an abstraction that the researcher creates describing the relationship between the joint distribution of a set of unobservable random variables that the researcher introduces in the analysis and the joint distribution of observable random variables.  Formally, a \textit{theory} is a pair $(\U,\mathcal{S})$ where~$\U$ is \emph{a} set of joint distributions of unobservable random variables and $\mathcal{S}$ is a correspondence $\mu\in\U\mapsto \mathcal{S}(\mu)\subseteq \Ob$. A \textit{model} of theory $(\U,\mathcal{S})$ is a pair $(\mu,\mathcal{S}(\mu))$ where $\mu\in\U$. The interpretation is that given the joint distribution on unobservable random variables is $\mu$, the theory determines that the joint distribution of observable random variables is necessarily in $\mathcal{S}(\mu)$. Whenever convenient we describe a theory by the collection of its models. A \textit{phenomenon} is a set of joint distributions on observables $\Phen\subseteq\Ob$. A theory  is \textit{well-specified} for the study of $\Phen$ if for each $\rho\in\Phen$ there is $\mu\in\U$ such that $\rho\in \mathcal{S}(\mu)$. A theory is \textit{consistent} with $\Phen$ if for each $\mu\in\U$, $\mathcal{S}(\U)\subseteq \Phen$. A theory is \textit{falsifiable} if $\mathcal{S}(\U)\subsetneq\Ob$.\footnote{Our definitions essentially follow \citet{Chambers-et-al-2014-AER}. We differ in that we explicitly define a theory as relating random distributions of unobservables and observables in the tradition in economics that assumes that even though the researchers are able to observe only finite data, when this data is powerful enough they can determine with some level of confidence whether distributions on observables belong to a particular  family of distributions \citep[e.g.][]{ATHEY-Haile-2007}. Thus, our notion of falsification should be understood as the existence of finite sets from which a researcher can conclude with a reasonable level of confidence that the distribution that generated the data does not belong to the distributions generated by the theory.}

\subsection{Observable Nash equilibrium}

Consider a set of agents $N\equiv\{1,...,n\}$ and a finite set of available actions to each agent,~$A_i$. Let $A\equiv A_1\times\dots\times A_n$.  Since we will not make any statement about environments with different set of agents and actions, let us fix $N$ and $A$. To avoid trivialities we assume that there are at least two agents with at least two actions available. For each $u\in \R^{N\times A}$ let  $\Gamma(u)\equiv(N,A,u)$ be the corresponding normal form game.

A strategy for agent $i$ is a probability distribution on $A_i$, denoted by $\sigma_i\in\Delta(A_i)$. A pure strategy places probability one on a given action. A strategy is interior if it places positive probability on each possible action. A profile of strategies is denoted by $\sigma\equiv(\sigma_i)_{i\in N}\in\Delta\equiv\Delta(A_1)\times\dots\times\Delta(A_n)$. Given $S\subseteq N$, we denote a subprofile of strategies for these agents by $\sigma_S$. When $S=N\setminus\{i\}$, we simply write $\sigma_{-i}\in \Delta_{-i}\equiv\times_{j\in N\setminus\{i\}}\Delta(A_j)$. Consistently, we concatenate partial strategy profiles as in $(\sigma_{-i},\mu_i)$. We consistently use this convention when operating with vectors.

We assume that both $u$ and~$\sigma$ are observable.  This is reasonable in laboratory experiments. This is also a valuable thought experiment that has been used to provide a foundation to Nash equilibrium theory \citep{Harsanyi-1973-IJGT}.

We denote agent~$i$'s expected utility given strategy profile $\sigma$ by $U_{u_i}(\sigma)$. We write $U_{u_i}(\sigma_{-i},a_i)$ for the utility that agent~$i$ gets from playing action~$a_i$ when the other agents play~$\sigma_{-i}$. A \textit{Nash equilibrium of $\Gamma(u)$} is a profile of strategies $\sigma$ such that for each $i\in N$ and each $\sigma'_i\in\Delta(A_i)$, $U_{u_i}(\sigma)\geq U_i(\sigma_{-i},\sigma'_i)$ \citep{Nash-1951}. We denote this set by $\Nash(\Gamma(u))$.

\subsection{Structural QRE theory}

Nash equilibrium theory, i.e., the one that associates with a trivial distribution of unobservables the set of all degenerate outcomes $\{(u,\sigma):u\in\R^{N\times A},\sigma\in \Nash(\Gamma(u))\}$, is easily \textit{falsified}. Because of this, researchers have modified this theory in order to account for a meaningful effect of unobservables. There are multiple ways in which this can be done. One can assume that the researcher imperfectly observes payoffs, that the agent is not a perfect utility maximizer, that the agent is not an expected utility maximizer, and so on. The structural QRE theory of \citet{mckelvey:95geb}, which we introduce next, allows the researcher to articulate the first two of these ideas while retaining the expected utility hypothesis.

For each $i\in N$ let $\Bcal_i$ be the set of Borel probability measures on~$\R^{A_i}$ that are absolutely continuous with respect to the Lebesgue measure and $\Bcal\equiv\Bcal_1\times\dots\times\Bcal_n$.\footnote{We do not assume perturbations have full support as \citet{mckelvey:95geb}. Because of this our sQRE models do not necessarily satisfy interiority and strictly contain \citet{mckelvey:95geb}'s sQRE models. Besides allowing us to present slightly more general results, this allows us to easily extend our analysis to \citet{Harsanyi-1973-IJGT}'s randomly disturbed payoff models as defined in a general form by \citet{GOVINDAN-Reny-Robson-2003-GEB}.} For each $\mu\equiv(\mu_i)_{i\in N}\in \Bcal$, let $(\Gamma(u),\mu)$ be the incomplete information game with independent common prior $\mu\equiv \mu_1\times\cdots\times\mu_n$ where payoffs are determined as follows. Given type $x_i\in \R^{A_i}$ for agent $i$, her expected utility index is $a\in A\mapsto u_i(a)+x_i(a_i)$. The interpretation of these perturbations is that the agent fails to perfectly recognize the difference of expected payoffs between the actions and correctly maximize \citep{mckelvey:95geb} or that there are unobserved shocks to expected utility of actions.

Absolute continuity of perturbations allows the researcher to associate, based on the expected utility maximization hypothesis, a unique observable behavior to agent~$i$, almost every $\mu_i$, as a response to a distribution of play of the other agents. More precisely, consider $i\in N$ and suppose that $\mu_i\in\Bcal_i$. One can easily see that for each $\sigma_{-i}\in\Delta_{-i}$ and for $\mu_i$ almost every realization of the perturbation, say $x_i$, there is a unique maximizer of
\begin{equation}a_i\in A_i\mapsto\sum_{a_{-i}}(u_i(a_{-i},a_i)+x_i(a_i))\sigma_{-i}(a_{-i})=U_{u_i}(\sigma_{-i},a_i)+x_i(a_i).\label{Eq:QRE-BR}\end{equation}
Thus, given $\mu_i$ and $\sigma_{-i}$, the probability with which agent $i$ is observed playing a given action is uniquely defined under the hypothesis of expected utility maximization.  Let
\[B^{u_i,\mu_i}_i(\sigma_{-i})\equiv (B^{u_i,\mu_i}_{ia_i}(\sigma_{-i}))_{a_i\in A_i}\in \Delta(A_i),\]
be this distribution. It is easy to see that this function is continuous and that the fixed points of $\sigma\in\Delta\mapsto(B^{u_i,\mu_i}_i(\sigma_{-i}))_{i\in N}\in \Delta$, which exist by Brouwer's fixed point theorem, are the set of Bayesian Nash equilibria of $(\Gamma(u),\mu)$, which we denote by $\BNE(\Gamma(u),\mu)$ \citep{mckelvey:95geb}.

Note from (\ref{Eq:QRE-BR}) that the best response operator depends only on the vector of expected utilities $(U_{u_i}(\sigma_{-i},a_i))_{a_i\in A_i}$. Thus, a best response operator is characterized by the function $x\in\R^{A_i}\mapsto Q_i^{\mu_i}(x)$, where
\[\sigma_{-i}\in \Delta_{-i}\mapsto B_i^{u_i,\mu_i}(\sigma_{-i})=Q_i^{\mu_i}(U_{u_i}(\sigma_{-i},\cdot))\in \Delta(A_{-i}).\]
The function $Q^\mu\equiv(Q^{\mu_i}_i)_{i\in N}$ only depends on $N$, $A$, and $\mu$, and does not depend on $u$. It is usually referred to as a \textit{structural Quantal Response Function} (sQRF) \citep{mckelvey:95geb,Goeree-Holt-Palfrey-2005-EE}.  The sQRF most commonly used in empirical analysis of experimental data is the Logistic form, $l^{\lambda_i}$, which is associated with the so-called double-exponential i.i.d.\ perturbation \citep{Goeree-et-al-2018}, and, for $\lambda_i\in[0,+\infty)$, assigns to each $a_i\in A_i$ and each $x\in\R^{A_i}$ the value,
\begin{equation}l^{\lambda_i}_{ia_i}(x)\equiv\frac{e^{\lambda_i x_{a_i}}}{\sum_{\hat a_i\in A_i}e^{\lambda_i x_{\hat a_i}}}.\label{Equation-Logistic-QRE}\end{equation}

\begin{definition}\rm The \textit{structural QRE (sQRE) theory} is the pair $(\Bcal,\mathcal{N})$ where for each $\mu\in\Bcal$, $\mathcal{N}(\mu)$ is the set of all $(u, \sigma)$ with $u\in \R^{N\times A}$ and $\sigma\in \BNE(\Gamma(u),\mu)$.
\end{definition}

\section{Results}\label{Sec-Results}

In an effort to construct an econometric model based on sQRE, \citet{Haile-et-al-2008} study a particular $\Psi_{HHK}\subseteq \Bcal$ that includes all perturbations that exhibit certain form of correlation. These authors show that $(\Psi_{HHK},\Ncal)$ is not \textit{falsifiable}, i.e., for each $(u,\sigma)\in\R^{N\times A}\times \Delta$ there is $\mu\in\Psi_{HHK}$ such that $\sigma\in \BNE(\Gamma(u),\mu)$. Thus, $(\Bcal,\Ncal)$ is not \textit{falsifiable}.

\textit{Falsifiability} is arguably a defining characteristic of scientific theories \citep{Popper-1965}. Thus, \citet{Haile-et-al-2008}'s result showed the need to discipline sQRE. As a response to this, \citet{Goeree-Holt-Palfrey-2005-EE} proposed to require \textit{consistency} of sQRE with an observable property of behavior that was always implicitly assumed in empirical applications of these models.

\begin{definition}\rm Let $u\in\R^{N\times A}$ and $\sigma\equiv(\sigma_i)\in \Delta$. $(u,\sigma)$ is \textit{payoff monotone} if for each $i\in N$, $\sigma_i$ is ordinally equivalent with $(U_{u_i}(\sigma_{-i},a_i))_{a_i\in A_i}$. We denote the set of $(u,\sigma)$ that are \textit{payoff monotone} by $\M$. When $(u,\sigma)\in\M$, we say that $\sigma$ is \textit{payoff monotone} for $u$.
\end{definition}

Intuitively, \textit{payoff monotonicity} requires that frequencies of play reveal the ranking of expected payoffs of all actions.

\begin{table}[t]
  \centering
  \begin{tabular}{ccccc}
     & &\multicolumn{3}{c}{Player 1}
  \\
 &&  \multicolumn{1}{c}{$a_1$}&\multicolumn{1}{c}{$a_2$}&\multicolumn{1}{c}{$a_{3}$}
\\\cline{3-5}
 Player 2&$b_{1}$&\multicolumn{1}{|c}{$100,100$}&\multicolumn{1}{|c|}{$100,100$}&\multicolumn{1}{c|}{$100,100$}
\\\cline{3-5}
&$b_2$&\multicolumn{1}{|c}{$0,101$}&\multicolumn{1}{|c|}{$0,102$}&\multicolumn{1}{c|}{$0,104$}\\\cline{3-5}
  \end{tabular}
    \caption{Game $\Gamma(u^*)$.}\label{Tab:Gamma1}
\end{table}

\begin{example}\rm\label{Ex:prism}
Consider $u^*$ defined in Table~\ref{Tab:Gamma1}.  The strategy space in $\Gamma(u^*)$ is a prism (Fig.~\ref{Fig:prism}) and $\Nash(\Gamma(u^*))$ is the set of profiles in which player~$2$ chooses $b_1$ and player~$1$ arbitrarily randomizes among her three available actions. In Fig.~\ref{Fig:prism}, this set corresponds to the upper lid of the prism.

Consider a profile $\sigma\in \Delta(A)$. If $\sigma_2(b_1)=1$, the expected utility of player $1$ is the same for each of her actions. Thus, only the centroid of the upper lid of the prism is payoff monotone for $u^*$. Since $b_1$ strictly dominates $b_2$ for player~$2$, then in a payoff monotone distribution for the game, $\sigma_2(b_1)>1/2$. Now, suppose that $\sigma_2(b_2)>0$. Then, $U_{u_1^*}(\sigma_2,a_3)>U_{u_1^*}(\sigma_2,a_2)>U_{u_1^*}(\sigma_2,a_1)$. Thus, in a payoff monotone distribution for $u^*$ in which $\sigma_2(b_2)>0$, we must have $\sigma_1(a_3)>\sigma_1(a_2)>\sigma_1(a_1)$. Thus, the set of payoff monotone distributions for $u^*$ contains, in particular, the centroid of the upper lid of the prism of strategies and the interior of the wedge area formed by the two shaded subprisms in Fig.~\ref{Fig:prism}. The subset of $\Nash(\Gamma(u^*))$ that can be approximated by payoff monotone behavior for $u^*$ is the union of the upper lids of the shaded subprisms in Fig.~\ref{Fig:prism}. $\qed$

\medskip

Game $\Gamma(u^*)$ is deceivingly simple. Trembling based refinements, for instance, predict that the only plausible distribution of play is that Player~$1$ plays $a_3$ and player~$2$ plays $b_1$. The reasoning is that since $a_3$ weakly dominates the other actions for player $1$, this agent should preemptively play $a_3$ in case player $2$ makes a mistake. Experiments with dominant strategy games, which generate payoff matrices similar to $u^*$, e.g., the second price auction, point to the prevalence of weakly dominated behavior in these games, however \citep[see][for a meta-study]{Velez-2018-SP}. Essentially, experiments have shown is that if player~$2$ (types) notices the difference in payoffs between her actions and plays $b_1$ with high probability, then player~$1$ is almost indifferent between her three actions. Since players do not seem to react to differences in expected payoffs by not playing at all suboptimal actions, it is plausible to observe frequencies of play in which player~$2$ is careful and player~$1$ is careless. (For instance, in the second-price auction experiments of \citet{Kegel-Levin-1993-EJ} and \citet{Li-AER-17}, which hold some similarity with game $\Gamma(u^*)$, dominant strategy play is consistently below 30\%.) Behavior is not arbitrary, however. Players tend to inform their choices based on the expected payoffs \citep{Goeree-Holt-Palfrey-2016-Book}. Thus, one can expect that player $1$ plays $a_3$ with higher probability than $a_2$ and $a_2$ with higher probability than~$a_1$. If one cannot expect further restrictions from data than those imposed by this intuition, each payoff monotone distribution for $u^*$, say $\sigma_2(b_1)=0.99$, $\sigma_1(a_1)=0.03$, $\sigma_1(a_2)=0.35$, and $\sigma_1(a_3)=0.62$, should be in the range of a theory that is \textit{well-specified} to study this phenomenon.
\end{example}
\begin{center}
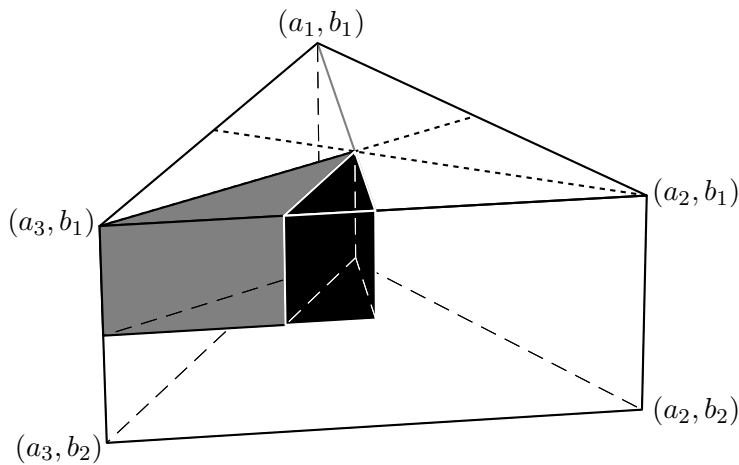
\begin{figure}[t]
\centering
\begin{pspicture}(-4,-1.5)(3,5)
 \psset{viewpoint=50 20 25,Decran=60}
\psSolid[
object=line,args= -3.4641016151 -3 3 1.7320508076 0 3,linecolor=gray
]
\psSolid[
object=line,args=3.4641016151 -3 3 -1.7320508076 0 3,linestyle=dotted
]
\psSolid[
object=line,args=0 3 3  0 -3 3,linestyle=dotted
]
\psSolid[
object=prisme,
action=draw,
base=-3.4641016151 -3 3.4641016151 -3 0 3,
h=3]
\psSolid[
object=prisme,
action=draw*,
fillcolor=gray,
base=0 -1 3.4641016151 -3 2.3094010768 -1,
h=1.5](0,0,1.5)
\psSolid[
object=prisme,
action=draw*,
linecolor=white,
fillcolor=black,
base=0 -1 2.3094010768 -1 1.7320508076 0,
h=1.5](0,0,1.5)
\rput[l](3,2.4){$(a_2,b_1)$}
\rput[l](3,-.5){$(a_2,b_2)$}
\rput[r](-4.3,-1){$(a_3,b_2)$}
\rput[r](-4.4,2){$(a_3,b_1)$}
\rput[c](-1.4,4.65){$(a_1,b_1)$}

\end{pspicture}
\caption{Strategies space in game $\Gamma(u^*)$, i.e., $\Delta(\{a_1,a_2,a_3\})\times\Delta(\{b_1,b_2\})$. There is a one to one correspondence between the strategies space and the prism.  For a given point in the prism, the probability with which player~$1$ plays an action is the distance of the point to the face opposed to the vertices representing the profiles in which the agent plays the action. The probability with which player~$2$ plays~$b_1$ is the distance of the point to the lower base of the prism. For instance, the centroid of the upper lid of the prism is the profile in which player~$1$ uniformly randomizes among all actions and player~$2$ plays $b_1$ with probability one. The pure strategy profile $(a_1,b_2)$ corresponds to the hidden vertex on the back of the prism. The upper lid of the prism is the set of Nash equilibria of the game. The interior of the shaded subprisms belong to the set of payoff monotone distributions $u^*$ (this set also contains the centroid of the upper lid of the strategies space and the distributions such that $1/2>\sigma_2(b_2)>0$ and $\sigma_1(a_1)=0<\sigma_1(a_2)<\sigma_1(a_3)$). The upper lids of the shaded subprisms are the Nash equilibria that are in the closure of weakly payoff monotone distributions for $u^*$. The empirical content of any monotone structural QRE model for this game is contained in the gray subprism. Thus, with exception of its left face, no element of the black subprism is in the range of any monotone sQRE.}
\label{Fig:prism}
\end{figure}
\end{center}
\vspace{-1.1cm}


\textbf{Example~\ref{Ex:prism} (continuation)}: The logistic QRE model, including its heterogeneous parameter versions, does not span the whole space of payoff monotone distributions for~$u^*$.  Indeed, no distribution such that $\sigma_1(a_2)>1/3$ is in the range of this model. In particular, no profile of strategies in the interior of the black subprism in Fig.~\ref{Fig:prism} is in the range of logistic QRE models. To prove this, let $\alpha\equiv \sigma_1(a_2)$ and $x\equiv\sigma_1(a_1)$. Given the functional form of the logistic QRF, we know that $\alpha/x=\exp(\lambda_i\sigma_1(b_2))$ and $\sigma_1(a_3)/\alpha=(1-\alpha-x)/\alpha=\exp(2\lambda_i\sigma_1(b_2))$. Thus, $\sigma_1(a_3)>\alpha^2/x$ and $\sigma_1(a_3)+\sigma_1(a_2)+\sigma_1(a_2)>\alpha^2/x+\alpha+x$. It is easy to see that the minimum of $(0,1-2\alpha]\mapsto\alpha^2/x+\alpha+x$ is achieved at $x=1-2\alpha$ and that this minimum is always greater than one for $\alpha>1/3$. $\qed$

\medskip
It is not surprising that the logistic QRE has these limitations. The logistic QRE model is tied to very particular perturbations of payoffs. For instance, the logistic sQRF has the property that the ratio between the probability of two actions is a function of the difference between their expected payoffs. In our analysis of $\Gamma(u^*)$, this property is seemingly responsible for the limitations of the model.

Thus, the question that remains is whether the limitation of logistic QRE models is simply an artifice of the form of the specific perturbations behind them or is a characteristic of all monotone sQRE. Surprisingly, the answer is the second. These limitations are shared by \emph{all} monotone sQRE (Theorem~\ref{Thm:Paradox2} below is proved by means of the analysis of a general family of games containing Example~\ref{Ex:prism}). This has the far reaching consequence that sQRE theory cannot be disciplined with $\M$ and remain \textit{well-specified} for the study of this phenomenon. Moreover, this is not tied to the number of players in our example or the number of actions they have available, as long as at least an agent has at least three actions available.\footnote{The empirical content of rQRE and sQRE for a given game is the same when all agents have at most two actions available \citep{Goeree-Holt-Palfrey-2005-EE}.}

\begin{theorem}[The paradox of monotone sQRE]\rm\label{Thm:Paradox}Suppose that at least an agent has at least three actions available. Then,
no set of sQRE models can be both \textit{consistent} with $\M$ and \textit{well-specified} to study this phenomenon.
\end{theorem}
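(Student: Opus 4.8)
\noindent
The plan is to prove the contrapositive constructively: I will exhibit one game together with one payoff monotone distribution for it that lies outside the range of \emph{every} sQRE model consistent with $\M$. Since well-specification for $\M$ requires that \emph{some} model in the set generate each element of $\M$, producing a single such orphaned distribution defeats every consistent set at once.

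\noindent
First I would reduce consistency with $\M$ to a pointwise condition on the structural quantal response functions. Fix the agent $i$ with $|A_i|\ge 3$. For an arbitrary target vector $v^*\in\R^{A_i}$, build a game $\Gamma(u)$ in which $i$'s payoff depends only on her own action, $u_i(a_{-i},a_i)=v^*(a_i)$ for all $a_{-i}$, so that $U_{u_i}(\sigma_{-i},\cdot)=v^*$ identically, and give the remaining agents (who exist since $|N|\ge 2$) payoffs that make them irrelevant to $i$. At every Bayesian Nash equilibrium of $(\Gamma(u),\mu)$ agent $i$ then plays $Q_i^{\mu_i}(v^*)$, and such an equilibrium exists by Brouwer. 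Hence if $(\Bcal',\Ncal)$ with $\Bcal'\subseteq\Bcal$ is consistent with $\M$, every $\mu\in\Bcal'$ has each $Q_i^{\mu_i}$ \emph{ordinally monotone}: $Q_i^{\mu_i}(v)$ is ordinally equivalent with $v$ for all $v$. It therefore suffices to show that the union, over all ordinally monotone sQRFs, of the behavior they generate in one fixed game fails to exhaust the payoff monotone distributions for that game.

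\noindent
Second I would specialize to $\Gamma(u^*)$ of Table~\ref{Tab:Gamma1} (the general case, with extra actions or agents, reduces to this by giving the extra actions strictly dominated payoffs, a routine approximation argument on the attainable behavior over $a_1,a_2,a_3$). Because $b_1$ strictly dominates $b_2$, monotonicity of $Q_2^{\mu_2}$ forces $t:=\sigma_2(b_2)\in(0,1/2)$, and since $Q_i^{\mu_i}$ is invariant under adding a constant to its argument, player~$1$'s equilibrium response is $\sigma_1=Q_1^{\mu_1}(0,t,3t)$. Conversely any $t\in(0,1/2)$ is realized by a suitable monotone $Q_2^{\mu_2}$, and a payoff monotone $(u^*,\sigma)$ with $\sigma_2(b_2)=t$ is exactly one with $\sigma_1(a_1)<\sigma_1(a_2)<\sigma_1(a_3)$. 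Thus the statement reduces to the following single-vector claim: for every ordinally monotone additive random-utility perturbation and every $t>0$, the induced choice probabilities at the value vector $(0,t,3t)$ satisfy $P(a_2)\le 1/3$, equivalently $P(a_2)-P(a_1)\le P(a_3)-P(a_2)$. Granting this, any payoff monotone profile with $\sigma_1(a_2)>1/3$ (for instance $\sigma_1=(0.03,0.35,0.62)$, the kind of carelessly-played distribution discussed after Example~\ref{Ex:prism}) is orphaned, proving the theorem.

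\noindent
The crux, and the step I expect to be hardest, is the single-vector inequality $P(a_2)\le 1/3$ at $(0,t,3t)$ for \emph{all} monotone perturbations, not merely the logistic one treated in Example~\ref{Ex:prism}. My approach is to pass to the plane of perturbation differences $(Y,Z)=(X_2-X_1,X_3-X_2)$, under whose law $D$ the three choice events become sectors with common apex $c(v)=(v_1-v_2,\,v_2-v_3)$ and fixed orientation; ordinal monotonicity for all $v$ becomes the requirement that the three sector masses be ordered by the signs of $(c_Y,c_Z,c_Y+c_Z)$. The apex $c=0$ normalizes each sector to mass $1/3$, while the boundaries $c_Y=0$, $c_Z=0$, $c_Y+c_Z=0$ yield exact mass-equalities between pairs of sectors. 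Decomposing the target sector at $c^*=(-t,-2t)$ against the three sectors at the origin and feeding in these equalities reduces $P(a_2)\le 1/3$ to a comparison $D(L)\le D(R')$ between two explicit planar regions. The difficulty is that this comparison does not follow from convexity of the expected-surplus function alone: along the line $(s,\,t-2s,\,3t+s)$ convexity only locates the sign change of $P(a_1)-2P(a_2)+P(a_3)$ inside $(-2t/3,\,t/3)$, an interval straddling $s=0$, so the full force of monotonicity at a continuum of apexes must be invoked. Establishing $D(L)\le D(R')$ from the monotonicity-induced symmetry equalities is where the real work lies.
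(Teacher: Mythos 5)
You have correctly reproduced the paper's architecture, but you have not proved the theorem. Your first step (consistency with $\M$ forces every $Q_i^{\mu_i}$ to be ordinally equivalent to its argument at \emph{every} vector, via games in which $i$'s payoff depends only on her own action) is the paper's Lemma~\ref{Lem:msQRE=mQRF}, and your ``single-vector claim'' --- that every monotone additive perturbation yields $P(a_2)\le 1/3$ at $(0,t,3t)$ --- is exactly the paper's Proposition~\ref{Pro:k3-or-permut--nonmonot1} specialized to $K=3$, $D_1=t$, $D_2=2t$. But you then stop: you write that establishing $D(L)\le D(R')$ ``is where the real work lies,'' i.e.\ the crux is left as an open step. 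That inequality \emph{is} the mathematical content of the theorem; everything preceding it is setup that the paper also carries out. As submitted, this is an unproven reduction, not a proof. Moreover, your fear that ``the full force of monotonicity at a continuum of apexes must be invoked'' signals that you do not yet see the argument: two auxiliary apexes suffice.

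Here is how the paper closes the gap, in your notation lifted back to $\R^{A_i}$ with $v_1<v_2=\dots=v_{K-1}<v_K$, $D_1\equiv v_2-v_1\le D_2\equiv v_K-v_{K-1}$, and $X_k$ the event that $x_k$ is the strict maximum of the perturbation. (i) Monotonicity at a constant vector gives $\mu_i(X_k)=1/K$ for all $k$. (ii) Monotonicity at $y'=(0,\dots,0,D_2)$ forces the choice probabilities of $a_1$ and $a_{K-1}$ to be equal, i.e.\ $\mu_i(\{x\in X_1:x_1>x_K+D_2\})=\mu_i(\{x\in X_{K-1}:x_{K-1}>x_K+D_2\})$; complementing inside the equal-mass cells $X_1,X_{K-1}$ turns this into the same equality with ``$<$''. (iii) Monotonicity at $y''=(-D_1,0,\dots,0)$ forces the probabilities of $a_{K-1}$ and $a_K$ to be equal, which (after subtracting $\mu_i(X_{K-1})=\mu_i(X_K)$) gives $\mu_i(\{x\in X_1:\max\{x_1-D_1,x_2,\dots,x_{K-2},x_K\}<x_{K-1}\})=\mu_i(\{x\in X_1:\max\{x_1-D_1,x_2,\dots,x_{K-1}\}<x_K\})$. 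Since the latter event implies $x_1<x_K+D_1\le x_K+D_2$, chaining (iii) with (ii) yields $\mu_i(\{x\in X_1:\max\{x_1-D_1,x_2,\dots,x_{K-2},x_K\}<x_{K-1}\})\le\mu_i(\{x\in X_{K-1}:x_{K-1}<x_K+D_2\})$. Finally, the event that $a_{K-1}$ is chosen at $v$ splits (up to null sets) into a piece inside $X_1$, which is contained in the left-hand event above, and a piece inside $X_{K-1}$, which is the complement in $X_{K-1}$ of the right-hand event; summing gives at most $\mu_i(X_{K-1})=1/K$. In your two-dimensional sector picture these are precisely the apexes $(0,-D_2)$ and $(-D_1,0)$ on your boundary lines, plus a set-inclusion step exploiting $D_2\ge D_1$ --- this is the argument you would need to supply. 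A secondary problem: your reduction of general $K\ge 3$ to the three-action game by making extra actions strictly dominated is not ``routine,'' because an arbitrary monotone $\mu$ can still put substantial mass on dominated actions, so the threshold $1/3$ no longer applies; the paper avoids this entirely by making the middle $K-2$ actions payoff-equivalent, so the bound becomes $1/K$ directly and no approximation over sub-action-spaces is needed.
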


The limitations of the monotone sQRE models are actually starker than what Theorem~\ref{Thm:Paradox} reveals. In some environments, one can also expect that behavior may eventually approach mutual best responses. Thus, it may be reasonable to be interested in models that generate behavior in a certain proximity of the set of Nash equilibria. As Example~\ref{Ex:prism} also hints, the incompatibility persists even if one only requires that the theory be able to explain payoff monotone behavior that is close to observable Nash equilibria: The upper lid of the black subprism in Fig.~\ref{Fig:prism} is part of $\Nash(\Gamma(u^*))$; each of these profiles belongs to an open set (relative to the strategies space) that is outside of the range of each monotone sQRE model, however.

\begin{definition}\rm Let $\Mcal\subseteq \Bcal$ be the set of perturbations $\mu$ for which for each $u\in\R^{N\times A}$ and each $\sigma\in \BNE(\Gamma(u),\mu)$, $(u,\sigma)\in\M$.
\end{definition}

\begin{theorem}\label{Thm:Paradox2}\rm Suppose that at least an agent has at least three actions available. Then, there is $u\in \R^{N\times A}$ for which there are $\sigma^*\in \Nash(\Gamma(u))$ and $\varepsilon>0$ such that
\begin{enumerate}
\item $\sigma^*$ belongs to the closure of $\{\sigma:(u,\sigma)\in\M\}$, and
\item $\{\sigma:||\sigma-\sigma^*||<\varepsilon\}\cap \{\sigma:\exists\mu\in\Mcal,\textrm{ s.t. }\sigma\in \BNE(\Gamma(u),\mu)\}=\emptyset$.
\end{enumerate}
\end{theorem}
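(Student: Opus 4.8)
The plan is to reduce the theorem to a statement about the range of a single agent's monotone quantal response along a fixed utility ray, using the game $\Gamma(u^*)$ and its generalizations. Fix the agent with three actions, call them $a_1,a_2,a_3$; give every other agent a strictly dominant action and make player~$1$'s payoff depend only on a distinguished two-action agent (player~$2$) and constant in all other agents' actions, so that we may restrict attention to a two-player game in which player~$1$ has $\{a_1,a_2,a_3\}$ and player~$2$ has $\{b_1,b_2\}$. Choose payoffs so that (i) player~$2$'s expected-utility gap $U_{u_2}(\sigma_1,b_1)-U_{u_2}(\sigma_1,b_2)$ is a positive constant independent of $\sigma_1$, and (ii) under $b_1$ player~$1$ is indifferent while under $b_2$ her payoffs are $p_1<p_2<p_3$ with $p_3-p_2>p_2-p_1$, as in $u^*$. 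Then $\Nash(\Gamma(u))=\{\sigma:\sigma_2(b_1)=1\}$, and exactly as in Example~\ref{Ex:prism} the Nash equilibria in the closure of $\{\sigma:(u,\sigma)\in\M\}$ are those with $\sigma_1(a_3)\ge\sigma_1(a_2)\ge\sigma_1(a_1)$; this is what will let me satisfy item~1 of the statement.

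The crucial simplification is a decoupling at equilibrium. Because player~$2$'s payoff does not depend on player~$1$'s action, for any $\mu\in\Mcal$ the equilibrium value $s\equiv\sigma_2(b_2)$ is a constant in $(0,1/2)$ determined by $\mu_2$ alone (strictly below $1/2$ by payoff monotonicity), independent of $\sigma_1$, while player~$1$ plays $\sigma_1=Q_1^{\mu_1}(U_{u_1}(\sigma_2,\cdot))$. By the translation invariance of any additive-perturbation best response — adding a constant to all of $U_{u_1}(\sigma_2,\cdot)$ leaves the argmax in (\ref{Eq:QRE-BR}) unchanged — we get $\sigma_1=Q_1^{\mu_1}(0,(p_2-p_1)s,(p_3-p_1)s)$, so the expected-utility vector fed to player~$1$ always lies on the fixed ray whose successive gaps are in the constant ratio $(p_3-p_2):(p_2-p_1)>1$. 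Hence the player-$1$ coordinate of $\bigcup_{\mu\in\Mcal}\BNE(\Gamma(u),\mu)$ is contained in the reachable set $R\equiv\{Q_1^{\mu_1}(0,\delta,\delta+\eta):\mu_1\ \text{monotone, absolutely continuous},\ 0<\delta\le\eta\}$. It therefore suffices to exhibit an ordered $\sigma_1^*$ whose neighborhood is disjoint from $R$; then for small $\varepsilon$ the ball around $\sigma^*\equiv(\sigma_1^*,b_1)$ meets no monotone sQRE, giving item~2.

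To locate $\sigma_1^*$ I would generalize the logistic computation in the continuation of Example~\ref{Ex:prism}, whose content is that convex utility gaps force \emph{log-convex} choice probabilities, $\sigma_1(a_2)^2\le\sigma_1(a_1)\sigma_1(a_3)$. The central claim is that this inequality holds for \emph{every} monotone absolutely continuous additive random-utility model along the ray: for $0<\delta\le\eta$, the vector $Q_1^{\mu_1}(0,\delta,\delta+\eta)$ satisfies $Q_2^2\le Q_1Q_3$. Granting this, take $\sigma_1^*=(1/5,2/5,2/5)$: it is ordered, so $\sigma^*$ is Nash and lies in the closure of $\{\sigma:(u,\sigma)\in\M\}$ (approach it through the strictly ordered monotone profiles $(1/5-t,2/5-t,2/5+2t)$ paired with $\sigma_2(b_2)=t\downarrow0$), yet it is strictly log-concave, $(2/5)^2=4/25>2/25=(1/5)(2/5)$. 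Since $(Q_1,Q_2,Q_3)\mapsto Q_2^2-Q_1Q_3$ is continuous, a whole ball around $\sigma_1^*$ keeps $Q_2^2>Q_1Q_3$ and so is disjoint from $R$.

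The crux, and the step I expect to be genuinely hard, is the structural inequality $Q_2^2\le Q_1Q_3$ for \emph{all} monotone additive models, since it must survive arbitrary (possibly correlated) perturbations and cannot rest on ordinal monotonicity alone — indeed rQRE, which imposes only ordinal monotonicity, can approximate $\sigma^*$ by \citet{Velez-Brown-2018-EE}. The leverage I would use is that the additive structure makes $Q_1^{\mu_1}=\nabla W$ for the convex, translation-invariant social-surplus $W(v)=\mathbb{E}_{x\sim\mu_1}\max_{a_i}(v_{a_i}+x_{a_i})$, and that $\mu_1\in\Mcal$ pins $\nabla W(0)$ to the centroid $(1/3,1/3,1/3)$; differentiating the monotonicity constraint at the symmetric point $v=0$ forces the Hessian $\nabla^2 W(0)$ to be a scalar multiple of the complete-graph Laplacian, so the first-order escape from the centroid is locked onto the utility ray itself and $\sigma_1(a_2)$ can only decrease below $1/3$. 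Promoting this infinitesimal rigidity to the global inequality — ruling out that correlation bends the trajectory into the log-concave region, which I would argue via cyclic monotonicity of $\nabla W$ anchored at the centroid together with the observation that any asymmetry capable of inflating $Q_2$ at convex gaps would already violate monotonicity at a nearby symmetric configuration — is where the real work lies. The remaining items (the reduction of the general action and player space to this family, and a compactness argument making the exclusion genuinely open and uniform over $\Mcal$) are routine.
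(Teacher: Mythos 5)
Your overall architecture---the dominant-action opponents, the decoupling of player~2's equilibrium mixture, the reduction to the behavior of a single monotone sQRF along the ray $(0,\delta,\delta+\eta)$ with $\delta\le\eta$, and the exclusion of an open ball around an ordered Nash profile---is sound and is essentially the strategy of the paper's own proof (via Lemma~\ref{Lm:EUpsilon1} and Proposition~\ref{Pro:k3-or-permut--nonmonot1}). The fatal problem is exactly the step you flag as the crux: the log-convexity inequality $\sigma_1(a_2)^2\le\sigma_1(a_1)\sigma_1(a_3)$ for \emph{every} monotone, absolutely continuous additive model is not only unproven, it is false. Counterexample: fix $c>0$ and let $\mu_1$ put mass $(1-\epsilon)/6$ near each of the six permutations of $(0,\,2.5c,\,3c)$ (each atom smoothed by the same small symmetric kernel) plus mass $\epsilon$ on i.i.d.\ Gaussians. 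This $\mu_1$ is absolutely continuous, fully supported, and permutation invariant, hence monotone (if $v_{a_j}>v_{a_k}$, the coordinate swap maps the event ``$a_k$ is maximal'' into the event ``$a_j$ is maximal'' and misses a nonempty open set); pairing it with i.i.d.\ Gaussians for the other agents gives $\mu\in\Mcal$. Evaluate at $v=(0,c,3c)$, a point on your ray with $\delta=c<\eta=2c$. In the unsmoothed limit the six perturbed utility vectors are $(0,3.5c,6c)$, $(0,4c,5.5c)$, $(2.5c,c,6c)$, $(2.5c,4c,3c)$, $(3c,c,5.5c)$, $(3c,3.5c,3c)$: action $a_2$ wins on exactly the fourth and sixth faces, and $a_1$ wins on none (it would need $x_1>3c+x_3$, impossible since $\max x_1=3c$ and $x_3\ge0$). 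So $Q^{\mu_1}_1(v)\rightarrow(0,\tfrac13,\tfrac23)$ as $\epsilon$ and the smoothing vanish, giving $\sigma_1(a_2)^2\approx\tfrac19$ while $\sigma_1(a_1)\sigma_1(a_3)\approx0$. This example also shows why your promotion plan cannot work: this perturbation has essentially zero Jacobian at the symmetric point (there are no ties among $0,2.5c,3c$), so the centroid/Hessian rigidity you invoke is satisfied vacuously, yet globally the distribution keeps $\sigma_1(a_2)$ pinned at $1/3$ while driving $\sigma_1(a_1)$ to zero; no cyclic-monotonicity argument anchored at the centroid can rule this out.

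What survives, and what the paper actually proves in Proposition~\ref{Pro:k3-or-permut--nonmonot1}, is the weaker bound $\sigma_1(a_2)\le1/3$ (more generally $B^{u_i,\mu_i}_{ia_{K-1}}(\sigma_{-i})\le 1/K$) whenever $v_1<v_2<v_3$ and $v_3-v_2>v_2-v_1$; the counterexample above shows this bound is tight, so no strengthening of log-convexity type is available. The paper's mechanism is a measure-partition argument, not local convex analysis: monotonicity at a constant utility vector forces $\mu_1(X_k)=1/3$ for each region $X_k$ where coordinate $k$ of the perturbation is maximal; monotonicity at two auxiliary utility vectors (one with gaps $(0,D_2)$, one with gaps $(D_1,0)$) yields identities between the measures of the relevant slices of these regions; and set inclusion together with $D_2\ge D_1$ then caps $\mu_1(G\cap X_1)+\mu_1(G\cap X_2)$ by $\mu_1(X_2)=1/3$. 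Your target $\sigma_1^*=(1/5,2/5,2/5)$ has $\sigma_1^*(a_2)=2/5>1/3$, so the correct bound does exclude a neighborhood of it, and the rest of your argument (Nash property, closure of $\{\sigma:(u,\sigma)\in\M\}$, continuity/openness) then goes through; but the exclusion step must be replaced by the paper's measure-comparison argument, which is the real content of the theorem.
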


We discuss the proof of Theorem~\ref{Thm:Paradox2} (this result implies Theorem~\ref{Thm:Paradox}). Our first step is to realize that \textit{consistency} of a sQRE model with \textit{payoff monotonicity} is equivalent to monotonicity of the corresponding sQRF.

\begin{lemma}\label{Lem:msQRE=mQRF}\rm Let $\mu\in\Bcal$. The following are equivalent.
\begin{enumerate}
\item For each $u\in\R^{N\times A}$ and each $\sigma\in\BNE(\Gamma(u),\mu)$, $(u,\sigma)\in\M$.
\item For each $i\in N$ and each  $x_i\in\R^{A_i}$, $Q^{\mu_i}_i(x_i)$ is ordinally equivalent to $x_i$.
\end{enumerate}
\end{lemma}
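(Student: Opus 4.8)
The plan is to prove the two implications separately; the substance is entirely in $(1)\Rightarrow(2)$, whose crux is to exhibit, for a prescribed expected-utility vector of a single agent, a genuine Bayesian Nash equilibrium at which that vector is realized on path. The easy direction $(2)\Rightarrow(1)$ I would dispatch by unwinding definitions: fix $u\in\R^{N\times A}$ and $\sigma\in\BNE(\Gamma(u),\mu)$; the fixed-point characterization gives $\sigma_i=B_i^{u_i,\mu_i}(\sigma_{-i})=Q_i^{\mu_i}(U_{u_i}(\sigma_{-i},\cdot))$ for each $i$, so applying hypothesis (2) with $x_i\equiv U_{u_i}(\sigma_{-i},\cdot)$ makes $\sigma_i=Q_i^{\mu_i}(x_i)$ ordinally equivalent to $(U_{u_i}(\sigma_{-i},a_i))_{a_i}$. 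As this holds for all $i$, $(u,\sigma)\in\M$, which is (1).

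For $(1)\Rightarrow(2)$, fix $i\in N$ and an arbitrary $x_i\in\R^{A_i}$, and build a game $\Gamma(u)$ that decouples agent $i$'s payoff from her opponents: set $u_i(a_{-i},a_i)\equiv x_i(a_i)$ for every $a_{-i}$, and let every other agent's payoff be constant, say $u_j\equiv 0$ for each $j\neq i$. Two features make this work. First, $U_{u_i}(\sigma_{-i},a_i)=x_i(a_i)$ for every $\sigma_{-i}\in\Delta_{-i}$, so agent $i$'s expected-utility vector is \emph{exactly} $x_i$ no matter what the others do, and her best response is the constant strategy $Q_i^{\mu_i}(x_i)$. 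Second, since each $u_j$ is constant, each opponent's best response $Q_j^{\mu_j}(0)$ is likewise constant. Hence the best-response operator $\sigma\mapsto(B_k^{u_k,\mu_k}(\sigma_{-k}))_{k\in N}$ is a constant map, whose value $\sigma^*$ is trivially its unique fixed point and therefore lies in $\BNE(\Gamma(u),\mu)$, with $\sigma_i^*=Q_i^{\mu_i}(x_i)$.

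Invoking hypothesis (1) at this equilibrium then gives $(u,\sigma^*)\in\M$, i.e.\ $\sigma_i^*=Q_i^{\mu_i}(x_i)$ is ordinally equivalent to $(U_{u_i}(\sigma_{-i}^*,a_i))_{a_i}=x_i$; since $i$ and $x_i$ were arbitrary, (2) follows. The step I expect to be the main obstacle is precisely the realization of an arbitrary $x_i$ as an on-path expected-utility vector of an \emph{actual} equilibrium, rather than as an abstract input to $Q_i^{\mu_i}$; the decoupling construction resolves it, and it is available only because the definitions quantify over all $u\in\R^{N\times A}$. No further appeal to absolute continuity is needed beyond the already-established fact that $Q_i^{\mu_i}(x_i)\in\Delta(A_i)$ is well defined for every $x_i$.
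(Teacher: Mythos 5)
Your proof is correct, and the hard direction $(1)\Rightarrow(2)$ takes a genuinely different (and more elementary) route than the paper's. Both you and the paper realize an arbitrary $x_i$ as an on-path expected-utility vector via the same decoupling device, setting $u_i(a_{-i},a_i)\equiv x_i(a_i)$ so that $U_{u_i}(\sigma_{-i},\cdot)=x_i$ for every $\sigma_{-i}$; the difference is how the opponents' equilibrium strategies are produced. The paper leaves $u_{-i}$ arbitrary, fixes agent $i$'s strategy at $\sigma_i\equiv Q_i^{\mu_i}(x_i)$, and invokes Brouwer's fixed point theorem on the continuous reduced map $\gamma_{-i}\in\Delta_{-i}\mapsto \bigl(Q_j^{\mu_j}(U_{u_j}(\gamma_{N\setminus\{i,j\}},\sigma_i,\cdot))\bigr)_{j\in N\setminus\{i\}}$ to obtain a completing profile $\sigma_{-i}$ (the paper phrases this as a proof by contradiction, but that is immaterial). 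You instead also flatten the opponents' payoffs, $u_j\equiv 0$ for each $j\neq i$, so that every agent's expected-utility vector is constant in the others' play; the full best-response operator is then a constant map, its value is trivially its unique fixed point, and no fixed-point theorem is needed at all. Your specialization is legitimate precisely because statement (1) quantifies over all $u\in\R^{N\times A}$, and it buys a strictly simpler argument. What the paper's version buys in exchange is a marginally stronger fact: whenever statement (2) fails, a payoff-non-monotone Bayesian Nash equilibrium exists for \emph{every} choice of the opponents' payoffs $u_{-i}$, not just for the degenerate one --- but that extra generality is not needed for the lemma as stated.
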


Proving Theorem~\ref{Thm:Paradox2} involves constructing a utility profile that admits a Nash equilibrium that is close to payoff monotone behavior and for which no monotone sQRE induces behavior close to that equilibrium. In order to get intuition on how to achieve this, it is useful to prove a much less ambitious statement. It is well known that perturbations satisfying the following property induce behavior in $\M$ \citep{mckelvey:95geb,Goeree-Holt-Palfrey-2016-Book}. This property is satisfied by i.i.d.\ perturbations, as in the Logistic sQRE model.

\begin{definition}\rm $\mu\in\Bcal$ is \textit{permutation invariant} if for each $i\in N$, each permutation $\pi:\R^{A_i}\rightarrow\R^{A_i}$, and each measurable $C\subseteq \R^{A_i}$, $\mu(C)=\mu(\pi(C))$.
\end{definition}

We now show that in Example~\ref{Ex:prism}, no distribution in the interior of the black subprism in Fig.~\ref{Fig:prism} is in the range of any permutation invariant sQRE for this game.

\medskip
\textbf{Example~\ref{Ex:prism} (Continuation)}: Suppose that $\sigma_{2}(b_2)>0$. Thus, $U_{u_1}(\sigma_{2},a_2)-U_{u_1}(\sigma_{2},a_1)<U_{u_1}(\sigma_{2},a_3)-U_{u_1}(\sigma_{2},a_2)$.  Let $\sigma_1\in\Delta(A_1)$ be such that $0<\sigma_1(a_1)<1/3<\sigma_1(a_2)<\sigma_1(a_3)$. Clearly, $\sigma_1$ is ordinally equivalent to $U_{u_1}(\sigma_{2},\cdot)$.  We claim that there is no permutation invariant $\mu$ such that $\sigma_1=Q^{\mu_1}_1(U_{u_1}(\sigma_{2},\cdot))$. Even though it would not belong to $\Bcal$, it is enough to think of $\mu_1$ as the realization of a fair dice with six faces labeled by the permutations of a vector $(0,x,y)$ where $0<x<y$ and estimate the probability that action $a_2$ is chosen if the perturbation is determined by the dice.\footnote{Our appendix presents a formal proof of Theorem~\ref{Thm:Paradox2} that is not based on permutation invariance. Thus, the informality of our argument here allows us to gain intuition without entering into technicalities.} More precisely, when the dice falls on $(0,x,y)$ the utility of action $a_1$ gets a perturbation $0$, the utility of action $a_2$ gets a perturbation $x$, and the utility of action $a_3$ gets a perturbation $y$, and so on for the other faces of the dice. Obviously, when the dice falls on $(0,x,y)$ or $(x,0,y)$ agent~$1$'s maximizer is action $a_3$. If the dice falls on $(y,0,x)$, the maximizer is never $a_2$. If action $a_2$ is to be the maximizer with positive probability, $a_2$ must be the maximizer when the dice falls on $(x,y,0)$.  There are only two additional faces of the dice for which $a_2$ can be the maximizer, $(0,y,x)$ or $(y,x,0)$. Suppose that $a_2$ is the maximizer for $(y,x,0)$. Then, $x+U_{u_1}(\sigma_{2},a_2)>y+U_{u_1}(\sigma_{2},a_1)$. Thus, $x+U_{u_1}(\sigma_{2},a_3)>y+U_{u_1}(\sigma_{2},a_2)$. This means that $a_2$ is not the maximizer for $(0,y,x)$. Thus, $a_2$ can actually be the maximizer in at most two out of six possible outcomes of the dice roll. $\qed$

\medskip
This analysis reveals that permutation invariant sQRE models are closely tied to the cardinal information in observable utilities. Indeed, these models impose restrictions on the difference of the probability assigned to two actions based on the difference in expected utility between these actions.

Our proof of Theorem~\ref{Thm:Paradox2} is by means of  an example that extends the structure of $u^*$ to general action spaces. The subtlety of our analysis, compared to the simplicity of the logistic QRE and the permutation invariant sQRE cases, is derived from our need to establish relationships between the outcomes of a sQRF based only on \textit{payoff monotonicity}.

\begin{table}[t]
  \centering
  \begin{tabular}{cp{2.5cm}p{2.5cm}p{2.5cm}p{2.5cm}p{2.5cm}}
      &\multicolumn{5}{c}{Player 1}
  \\
  &  \multicolumn{1}{c}{$a_1$}&\multicolumn{1}{c}{$a_2$}&\multicolumn{1}{c}{$\dots$}&\multicolumn{1}{c}{$a_{K-1}$}&
  \multicolumn{1}{c}{$a_{K}$}
\\\cline{2-6}
$a_{-1}^*$&\multicolumn{1}{|c}{$5$}&\multicolumn{1}{|c|}{$5$}&\multicolumn{1}{|c|}{$\dots$}
&\multicolumn{1}{|c|}{$ 5$}&\multicolumn{1}{|c|}{$5$}
\\\cline{2-6}
$A_{-1}\setminus\{a_{-1}^*\}$&\multicolumn{1}{|c}{$1$}&\multicolumn{1}{|c|}{$2$}&\multicolumn{1}{|c|}{$\dots$}&\multicolumn{1}{|c|}{$ 2$}&\multicolumn{1}{|c|}{$4$}\\\cline{2-6}
  \end{tabular}
    \caption{Game $\Gamma(u)\equiv(N,A,u)$, $N\equiv\{1,...,n\}$, $A_1\equiv\{a_1,...,a_K\}$ with $K\geq 3$, and $|A_{-1}|\geq 2$. The table shows the payoff of agent~$1$. Each agent $j>1$ gets a payoff of $1$ if she plays $a^*_j$ and zero otherwise.}\label{Tab:Upsilon1}
\end{table}

The game that allows us to prove Theorem~\ref{Thm:Paradox2} is defined in Table~\ref{Tab:Upsilon1}. In this game each agent $i\neq 1$ has a strictly dominant action. The profile of these strictly dominant actions for these agents is~$a_{-1}^*$. Agent~$1$ is indifferent among all actions if all other agents play their strictly dominant action. Agent~$1$ has three different types of actions. Action~$a_1$, which is weakly dominated by actions~$a_2,...,a_{K-1}$, which are all payoff equivalent. All actions $\{a_1,...,a_{K-1}\}$ are weakly dominated by~$a_K$ for this agent. The essential feature of this game is that given any $\sigma_{-1}$ for which $\sigma_{-i}(a^*_{-i})<1$, the difference in agent~$1$'s expected payoff between actions~$a_K$ and~$a_{K-1}$ is greater than the difference in expected payoff between actions~$a_2$ and~$a_1$.

\begin{lemma}\label{Lm:EUpsilon1}\rm Let $\Gamma(u)$ be the game in Table~\ref{Tab:Upsilon1}. There is  $\sigma\in \Nash(\Gamma(u))$ that belongs to the closure of $\{\gamma:(u,\gamma)\in\M\}$ and in which each agent $j\neq 1$ plays the strictly dominant action and $\sigma_1(a_1)<1/K<\sigma_1(a_2)= \dots=\sigma_1(a_{K-1})<\sigma_1(a_{K})$.
\end{lemma}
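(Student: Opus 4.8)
The plan is to exhibit a single target profile $\sigma$ and approximate it from within $\{\gamma:(u,\gamma)\in\M\}$ by perturbing only the strategies of the agents $j\neq 1$, while keeping agent~$1$'s strategy fixed. First I would fix $\sigma$ by letting each agent $j\neq 1$ play her strictly dominant action $a^*_j$ with probability one, and choosing $\sigma_1$ to satisfy the required inequalities $\sigma_1(a_1)<1/K<\sigma_1(a_2)=\cdots=\sigma_1(a_{K-1})<\sigma_1(a_K)$. Such a $\sigma_1$ clearly exists: take $\sigma_1(a_2)=\cdots=\sigma_1(a_{K-1})=p$ with $p=1/K+\delta$ for small $\delta>0$, $\sigma_1(a_K)=p+\eta$ with small $\eta>0$, and $\sigma_1(a_1)=1-(K-2)p-\sigma_1(a_K)=1/K-(K-1)\delta-\eta$, which is positive and below $1/K$ for $\delta,\eta$ small. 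Because each $j\neq 1$ has a strictly dominant action and agent~$1$ is indifferent among all her actions when $\sigma_{-1}=a^*_{-1}$, this $\sigma$ lies in $\Nash(\Gamma(u))$.

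The key computation is to evaluate agent~$1$'s expected utilities against an arbitrary $\gamma_{-1}$ and read off their ordinal ranking. Writing $q\equiv\gamma_{-1}(a^*_{-1})=\prod_{j\neq 1}\gamma_j(a^*_j)$ for the probability that all other agents play $a^*_{-1}$, the payoff structure of Table~\ref{Tab:Upsilon1} gives $U_{u_1}(\gamma_{-1},a_1)=1+4q$, $U_{u_1}(\gamma_{-1},a_i)=2+3q$ for $i\in\{2,\dots,K-1\}$, and $U_{u_1}(\gamma_{-1},a_K)=4+q$. Hence $U_{u_1}(\gamma_{-1},a_i)-U_{u_1}(\gamma_{-1},a_1)=1-q$ and $U_{u_1}(\gamma_{-1},a_K)-U_{u_1}(\gamma_{-1},a_i)=2(1-q)$, so for \emph{every} $q<1$ the ranking is exactly $U_{u_1}(\gamma_{-1},a_K)>U_{u_1}(\gamma_{-1},a_2)=\cdots=U_{u_1}(\gamma_{-1},a_{K-1})>U_{u_1}(\gamma_{-1},a_1)$, the order already built into $\sigma_1$. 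This invariance of the ranking in $q$ (for $q<1$) is the essential feature of the game and is what lets me hold agent~$1$'s strategy fixed along the approximating sequence.

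Next I would build the sequence. For each $m$, set $\gamma_1^{(m)}\equiv\sigma_1$; for each $j\neq 1$ with $|A_j|\geq 2$ let $\gamma_j^{(m)}(a^*_j)=1-1/m$ and spread the remaining mass $1/m$ uniformly over $j$'s other actions; and for any $j$ with $|A_j|=1$ set $\gamma_j^{(m)}=a^*_j$. Then $\gamma^{(m)}\to\sigma$ as $m\to\infty$. It remains to verify $(u,\gamma^{(m)})\in\M$ for every $m$. For agent~$1$: since $|A_{-1}|\geq 2$ forces at least one $j\neq 1$ with $|A_j|\geq 2$, we have $q^{(m)}=\prod_{j\neq 1}\gamma_j^{(m)}(a^*_j)<1$, so by the computation above $\gamma_1^{(m)}=\sigma_1$ is ordinally equivalent to $U_{u_1}(\gamma^{(m)}_{-1},\cdot)$. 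For each agent $j\neq 1$: her utility is $1$ at $a^*_j$ and $0$ at every other action regardless of the opponents' play, so ordinal equivalence demands that $a^*_j$ carry the highest probability and that all remaining actions carry \emph{equal} probability, which is precisely how $\gamma_j^{(m)}$ was built. Thus each $\gamma^{(m)}$ is payoff monotone, and $\sigma$ lies in the closure of $\{\gamma:(u,\gamma)\in\M\}$.

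As for the main obstacle, the only genuine content is the expected-utility computation that pins agent~$1$'s ordinal ranking to be the same for every $q<1$; the rest is bookkeeping. The detail easiest to overlook is that payoff monotonicity for the dominated agents $j\neq 1$ is not automatic: their tied $0$-payoffs across all non-dominant actions force \emph{equal} frequencies under ordinal equivalence, which is why the off-dominant mass must be spread uniformly rather than arbitrarily.
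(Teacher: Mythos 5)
Your proof is correct and takes essentially the same route as the paper: both verify that the target profile is Nash and then approximate it by payoff-monotone profiles in which the opponents are made slightly interior, with the key observation that agent~$1$'s expected-utility ranking $U_{u_1}(\gamma_{-1},a_1)<U_{u_1}(\gamma_{-1},a_2)=\dots=U_{u_1}(\gamma_{-1},a_{K-1})<U_{u_1}(\gamma_{-1},a_K)$ holds for every $q<1$ (the paper mixes the \emph{entire} profile with the uniform distribution, whereas you hold $\sigma_1$ fixed and perturb only the agents $j\neq 1$ --- an immaterial difference). One trivial adjustment: for an opponent with exactly two actions your sequence should start at $m\geq 3$, since at $m=2$ one gets $\gamma_j^{(m)}(a_j^*)=1-1/m=1/m$, a tie that violates ordinal equivalence for that single term; the tail of the sequence is unaffected, so the closure argument stands.
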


Lemma~\ref{Lm:EUpsilon1} states that there are $(u,\sigma)\in\M$ arbitrarily close to an observable equilibrium of $\Gamma(u)$ in which agent~$1$ plays actions $\{a_2,...,a_{K-1}\}$ with probability greater than $1/K$. The following proposition identifies restrictions on distributions generated by monotone sQRE models. The proof of Theorem~\ref{Thm:Paradox2} is completed by showing, based on these restrictions, that it is impossible for these models to generate behavior close to the equilibrium identified in Lemma~\ref{Lm:EUpsilon1}.

\begin{proposition}\label{Pro:k3-or-permut--nonmonot1}\rm Let $\Gamma(u)$ be the game in Table~\ref{Tab:Upsilon1} and $\mu\in\Mcal$. Let $\sigma_{-i}\in\Delta(A_{-i})$ be such that $U_{u_i}(\sigma_{-i},a_1)<U_{u_i}(\sigma_{-i},a_2)=\dots=U_{u_i}(\sigma_{-i},a_{K-1})< U_{u_i}(\sigma_{-i},a_K)$,
and $U_{u_i}(\sigma_{-i},a_K)-U_{u_i}(\sigma_{-i},a_{K-1})>U_{u_i}(\sigma_{-i},a_{2})-U_{u_i}(\sigma_{-i},a_{1})$. Then, $B^{u_i,\mu_i}_{ia_{K-1}}(\sigma_{-i})\leq1/K$.
\end{proposition}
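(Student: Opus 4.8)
The plan is to reduce the statement to a property of the perturbation $\mu_i$ alone and then convert the strict cardinal gap in expected utilities into the probability bound. First I would invoke Lemma~\ref{Lem:msQRE=mQRF}: since $\mu\in\Mcal$, the map $Q^{\mu_i}_i$ is ordinally equivalent to its argument. Writing $v_j\equiv U_{u_i}(\sigma_{-i},a_j)$ and $b_j\equiv B^{u_i,\mu_i}_{ia_j}(\sigma_{-i})=\mu_i(\{x:v_j+x_j>v_l+x_l \text{ for all } l\neq j\})$, the hypotheses say exactly $v_1<v_2=\dots=v_{K-1}<v_K$ with top gap $c\equiv v_K-v_{K-1}$ strictly exceeding bottom gap $d\equiv v_2-v_1>0$. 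Because $v_2=\dots=v_{K-1}$, ordinal equivalence forces these $K-2$ probabilities to coincide; call the common value $p$, so that the target is $p\le 1/K$. Absolute continuity makes the maximizer a.s.\ unique, whence $b_1+(K-2)p+b_K=1$ with $b_1<p<b_K$. Consequently $p\le 1/K$ is \emph{equivalent} to $b_1+b_K\ge 2/K$, i.e.\ to the assertion that the win-probability gap at the top is at least the gap at the bottom.

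The heart of the argument is then to establish $b_1+b_K\ge 2/K$ by mimicking the permutation-invariant ``dice'' computation, but replacing the equiprobability of permuted realizations with the only symmetry that $\M$ does supply. Applying ordinal equivalence at auxiliary offsets that tie two coordinates yields that any two actions carrying a common offset are interchangeable inside every ``is the maximizer'' event --- a pairwise symmetry strictly weaker than full permutation invariance. Using it, I would pair, on the perturbation space, each region in which a middle action is the maximizer \emph{because a rival's large shock is neutralized only by the small bottom gap $d$} with a companion region in which that same shock must instead be neutralized against the larger top gap $c$. The inequality $c>d$ is precisely what makes the middle action fail to be the maximizer in the companion region --- the analogue of the implication $x+v_2>y+v_1\Rightarrow x+v_3>y+v_2$ used in the permutation-invariant calculation --- so those companion regions lie inside the winning regions of the extreme actions $a_1$ and $a_K$. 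Aggregating these matchings over the $K-2$ middle actions and the two extremes should deliver $b_1+b_K\ge 2/K$, hence $p=B^{u_i,\mu_i}_{ia_{K-1}}(\sigma_{-i})\le 1/K$.

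The hard part will be making the pairing measure-consistent using only this weak pairwise symmetry rather than permutation invariance, since the relevant extreme coordinates are never tied at $v$ itself, so the tied-coordinate identities do not apply at $v$ directly. The remedy I anticipate is to transport them through the translation-equivariance of the maximizer operator (shifting an offset is the same as translating the perturbation) and the assumption-free comparative static that lowering a rival's offset enlarges an action's winning region, effectively integrating the co-winning densities along a path from the all-equal offset --- where each action wins with probability exactly $1/K$ --- to $v$. Verifying that the matched companion regions are disjoint across the two extreme actions and over the $K-2$ middle actions, so the measures aggregate to the factor needed for $2/K$ rather than double-counting, is the delicate bookkeeping; it is there, and not in the reduction of the first paragraph, that the full strength of $c>d$ together with payoff monotonicity must be spent.
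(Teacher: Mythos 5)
Your first paragraph is correct: by Lemma~\ref{Lem:msQRE=mQRF}, $\mu\in\Mcal$ gives pointwise ordinal equivalence of $Q^{\mu_i}_i$, hence $b_2=\dots=b_{K-1}=p$, and $p\le 1/K$ is equivalent to $b_1+b_K\ge 2/K$. The gap is in the second and third paragraphs, and you essentially concede it yourself: the pairing is never constructed, and the symmetry you propose to build it from is not one that payoff monotonicity supplies. Monotonicity at an offset vector $y$ with $y_j=y_l$ yields only the scalar identity $\mu_i(W_j(y))=\mu_i(W_l(y))$, where $W_j(y)$ denotes the winning region of action $j$ at offsets $y$; it does not make $j$ and $l$ ``interchangeable inside every is-the-maximizer event,'' which would be a set-level (partial permutation) invariance of $\mu_i$ --- strictly stronger than monotonicity and false in general. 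Likewise, ``integrating the co-winning densities along a path'' from the all-equal offset to $v$ has no foundation: monotonicity provides no density-level or local information whatsoever, only equalities between measures of whole winning regions at each tied offset vector, so there is nothing to integrate.

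What the paper does instead, and what is missing from your proposal, is a concrete choice of auxiliary offset vectors whose winning regions decompose along one fixed partition. Partition $\R^{A_i}$ into the raw argmax cells $X_k\equiv\{x\in\R^{A_i}:\{k\}={\arg\max}_{l}\,x_l\}$; monotonicity at the all-equal offset gives $\mu_i(X_k)=1/K$ for every $k$. One then checks that the winning region $G$ of $a_{K-1}$ at the true offsets $v$ satisfies $G\subseteq X_1\cup X_{K-1}$ up to null sets, so only two cells matter. Two auxiliary games are used: offsets $y'$ with $y'_1=\dots=y'_{K-1}<y'_K=y'_{K-1}+D_2$, whose winning regions for $a_1$ and $a_{K-1}$ are exactly $\{x\in X_1:x_1>x_K+D_2\}$ and $\{x\in X_{K-1}:x_{K-1}>x_K+D_2\}$; and offsets $y''$ with $y''_1<y''_2=\dots=y''_K=y''_1+D_1$, whose winning regions for $a_{K-1}$ and $a_K$ are $X_{K-1}$ and $X_K$ together with explicit subsets of $X_1$. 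Applying monotonicity to each auxiliary game and complementing inside cells of equal measure produces two measure identities; chaining them with the set inclusions implied by $D_2>D_1$ yields $\mu_i(G\cap X_1)+\mu_i(G\cap X_{K-1})\le\mu_i(X_{K-1})=1/K$, which is the claim. Your proposal gestures at ``auxiliary offsets that tie two coordinates'' but never identifies these particular games, never derives the identities, and replaces them with a matching whose measure-consistency you flag as unresolved; that step is the heart of the proof, not bookkeeping, and as proposed it does not go through.
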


\section{Discussion}\label{Sec-Discussion}

We have shown that the sQRE theory of \citet{mckelvey:95geb} cannot be disciplined by \textit{consistent} with payoff monotonicity and remain \textit{well-specified} to study this phenomenon.

As outlined in the introduction, this result points to the need to understand better the empirical content of payoff monotone sQRE models. It is interesting, but beyond the scope of this paper, to develop a characterization of the empirical content of payoff monotone sQRE models. Such a characterization is the basis to determine if there is empirical support for the restrictions, beyond payoff monotonicity, that are implicit in payoff monotone sQRE. Alternatively, one can also reevaluate the structure of the unobservables in sQRE models in order to construct a thory that is both \textit{consistent} with $\M$ and \textit{well-specified} to study this phenomenon. In a companion paper we show that the rQRE models of \citet{Goeree-Holt-Palfrey-2005-EE}, and in particular a finitely parametric form of these models based on the control costs games  of \citet{VanDamme-1991-Springer}, essentially have these properties \citep{Velez-Brown-2018-EE}.\footnote{Regular QRE satisfies interiority, thus it may not generate some payoff monotone distributions in the boundary of the strategies space.}

sQRE models are closely related with \citet{Harsanyi-1973-IJGT}'s randomly disturbed payoff models, which only differ in that unobservables are perturbations $x_i\in\R^A$ that determine utility indices $a\in A\mapsto u_i(a)+x_i(a)$.\footnote{Note that in sQRE models perturbations for agent $i$ are functions of agent~$i$'s action only.} Proposition~\ref{Pro:k3-or-permut--nonmonot1} generalizes to this class of models, with the proviso that one only requires best response functions be ordinally equivalent to payoff vectors  (we provide details in the Appendix). Lemma~\ref{Lem:msQRE=mQRF} does not directly generalize. Observe that this result is a statement about QRFs, which are not well-defined in \citet{Harsanyi-1973-IJGT}'s models. We do not know if an equivalent result actually holds. Without this lemma, the implications of Lemma~\ref{Lm:EUpsilon1} and Proposition~\ref{Pro:k3-or-permut--nonmonot1}, which both hold for these models, can be summarized as follows. \textit{If one requires best response operators to produce only behavior that is ordinally equivalent to observed expected utility, \citet{Harsanyi-1973-IJGT}'s models are not well-specified for the study of $\M$}. Note that the first requirement in this statement is a decision theoretical thought experiment, as opposed to the requirement of consistency with an observable phenomenon in strategic situations as in Theorem~\ref{Thm:Paradox}. Thus, it is still an open question whether the equivalent to Theorem~\ref{Thm:Paradox} holds for \citet{Harsanyi-1973-IJGT}'s models.

One can use randomly perturbed payoff models to refine Nash equilibrium \citep{VanDamme-1991-Springer}. Essentially, one can identify a Nash equilibrium as implausible if it cannot be approached by behavior in perturbed games. If one does not restrict perturbations, one may be using for this exercise models that are easily rejected by data. Thus, it is sensible to impose some discipline on perturbations. \citet{VanDamme-1991-Springer} shows that if one requires permutation invariance of perturbations on approximations by \citet{Harsanyi-1973-IJGT}'s models, the set of Nash equilibria that can be approximated, the ``firm equilibria,'' may be a strict subset of the Nash equilibrium set. \citet{Mackelvey-Palfrey-1996-JER} alternatively propose to refine Nash equilibria by approximation of Logistic QRE. Both of these approximations implicitly assume payoff monotonicity. As we have shown, randomly perturbed payoff models depend on assumptions about the structure of unobservables that are beyond consistency with payoff monotonicity. Thus, one can think of a better founded refinement of Nash equilibrium by only requiring approximation by payoff monotone behavior. This is the ``empirical equilibrium'' refinement of \citet{Velez-Brown-2018-EE}. Interestingly, Theorem~\ref{Thm:Paradox2} implies that there are empirical equilibria that are neither firm, nor approachable by any structural QRE model. Thus, both these last two refinements impose restrictions due to their particular functional forms, which are not observable, as opposed to empirical equilibrium which is based only on consistency with payoff monotonicity.

\section*{Appendix}

\begin{proof}[Proof of Lemma~\ref{Lem:msQRE=mQRF}] Let $\mu\in\Bcal$ and $u\in\R^{N\times A}$. Consider first $\sigma\in \BNE(\Gamma(u),\mu)$. Then, $\sigma_i=Q^{\mu_i}_i(U_{u_i}(\sigma_{-i},\cdot))$. If statement 2 is holds, then $(u,\sigma)\in\M$. Suppose now that for each $u\in\R^{N\times A}$ and each $\sigma\in\BNE(\Gamma(u),\mu)$ we have that $(u,\sigma)\in\M$. We claim that statement 2 holds. Suppose by contradiction that there are $i\in N$ and $x_i\in\R^{A_i}$ such that $\sigma_i\equiv Q^{\mu_i}_i(x_i)$ is not ordinally equivalent with $x_i$. Let $u_{-i}\in\R^{N\setminus\{i\}\times A}$. Consider the function
\[\gamma_{-i}\in\Delta_{-i}\mapsto (Q_j^{\mu_j}(U^u_j(\gamma_{N\setminus\{i,j\}},\sigma_i,\cdot)))_{j\in N\setminus\{i\}}\in\Delta_{-i}.\]
This function is continuous because is the composition of structural QRFs and the expected utility operator. Thus, by Brouwer's fixed point theorem it has a fixed point $\sigma_{-i}$. For each $a\in A$ let $u_i(a)\equiv x_i(a_i)$. Then, $U_{u_i}(\sigma_{-i},\cdot)=x_i$. Thus, $\sigma=Q^{\mu_j}_j(U_{u_j}(\sigma_{-j},\cdot))_{j\in N}$. Thus, $\sigma\in \BNE(\Gamma(u),\mu)$.
\end{proof}

\begin{proof}[\textit{Proof of Lemma~\ref{Lm:EUpsilon1}}]Clearly $\Nash(\Gamma(u))$ is the set of distributions in which each $j\neq1$ plays the dominant action and agent $1$ arbitrarily randomizes. Let $\sigma$ be such that each agent $j\neq 1$ plays the strictly dominant action with certainty, and $\sigma_1(a_1)<\sigma_1(a_2)= \dots=\sigma_1(a_{K-1})<\sigma_1(a_{K})$. Then, $\sigma\in \Nash(\Gamma(u))$. Let $\lambda\in\N$ and $\sigma^\lambda$ be the convex combination that places $(1-1/\lambda)$ weight on $\sigma$ and $1/\lambda$ on a uniform distribution. Clearly as $\lambda\rightarrow\infty$, $\sigma^\lambda\rightarrow\sigma$. Thus, there is $\Lambda\in\N$ such that for each $\lambda\geq \Lambda$, $\sigma^\lambda$ is ordinally equivalent to $\sigma$. Since for each $\lambda\in \N$, $\sigma^\lambda$ is interior, $U_{u_i}(\sigma^\lambda_{-i},a_1)<U_{u_i}(\sigma^\lambda_{-i},a_2)= \dots=U_{u_i}(\sigma^\lambda_{-i},a_{K-1})<U_{u_i}(\sigma^\lambda_{-i},a_K)$, and for each $j\neq i$, if $a^*_j\in A_j$ is this agent's dominant action, $U_{u_j}(\sigma^\lambda_{-j},a_j^*)>U_{u_j}(\sigma^\lambda_{-j},a_j)$, and for each pair of actions $\{a_j,a_j'\}\subseteq A_j$ that are not dominant, $U_{u_j}(\sigma^\lambda_{-j},a_j)=U_{u_j}(\sigma^\lambda_{-j},a_j')$. Thus, $(u,\sigma^\lambda)\in\M$. Thus, $\sigma$ belongs to the closure of $\{\gamma:(u,\gamma)\in\M\}$. Thus, for each $1/K<\alpha<1/(K-1)$, there is $\sigma\in \Nash(\Gamma(u))$ that belongs to the closure of $\{\gamma:(u,\gamma)\in\M\}$ and such that $0=\sigma_1(a_1)<\alpha=\sigma_1(a_2)= \dots=\sigma_1(a_{K-1})<(1-(K-2)\alpha)=\sigma_1(a_{K})$.
\end{proof}

\begin{proof}[\textit{Proof of Proposition~\ref{Pro:k3-or-permut--nonmonot1}}]Let $k\in\{1,...,K\}$.\footnote{Proposition~\ref{Pro:k3-or-permut--nonmonot1} generalizes to all of \citet{Harsanyi-1973-IJGT}'s randomly disturbed payoff models for which best response correspondences are monotone with respect to expected utlity. In order to prove this results one can proceed as follows (see \citet{GOVINDAN-Reny-Robson-2003-GEB} for a modern presentation of this model that embeds both sQRE and \citet{Harsanyi-1973-IJGT}'s models). For each $x\in\R^A$, write $x_{a_k}\equiv (x_{(a_{-i},a_l)})_{a_{-i}\in A_{-i}}\in \R^{A_{-i}}$. Let $y_k(\sigma_{-i})=\sigma_{-i}\cdot x_{a_k}$. Apply the argument to the sets $X_k\equiv\{x\in\R^A:\{k\}={\arg\max}_{k=1,...,K}y_k(\sigma_{-i})\}$ and $G\equiv\{x\in\R^{A}:\{K-1\}=\arg\max_{l=1,...,K}v_l+y_{l}(\sigma_{-i})\}$, and adjust the definitions of all sets of perturbations to be defined by conditions on the components of $y(\sigma_{-i})$ instead of $x$.} Let $v_k\equiv U_{u_i}(\sigma_{-i},a_k)$ and
\[X_k\equiv\{x\in\R^{A_i}:\{k\}={\arg\max}_{l=1,...,K}x_{l}\}.\]
Let $\mu\in\Mcal$. Consider $\bar u_i\in\R^{A_i}$ for which agent $i$ has equal payoff from each action profile. Since $\mu\in\Mcal$, $B^{\bar u_i,\mu_i}_i(\sigma_{-i})=(1/K,...,1/K)$. Thus, for each $k=1,...,K$, $\mu_i(X_k)=1/K$.

Since $\mu\in\Bcal$, for each measurable set $G\subseteq\R^{A_i}$,
\begin{equation}\mu_i(G)=\sum_{l=1}^K\mu_i(G\cap X_k).\label{Eq:basis1}\end{equation}
Let $G\equiv\{x\in\R^{A_i}:\{K-1\}=\arg\max_{l=1,...,K}v_l+x_{l}\}$. Then, $B^{u_i,\mu_i}_{ia_{K-1}}(\sigma_{-i})=\mu_i(G)$. Since $v_1<v_2=\dots=v_{K-1}< v_K$,
\begin{equation}\mu_i(G)=\mu_i(G\cap X_1)+\mu_i(G\cap X_{K-1}).\label{Eq:g-expression1}\end{equation}
Let $D_1\equiv v_{2}-v_{1}$ and $D_2\equiv v_K-v_{K-1}$.   Consider $u'_i\in\R^{A_i}$ for which $y'\equiv U_{u_i'}(\sigma_{-i},\cdot)$ is such that $y_1'=\dots=y_{K-1}'<y_{K}'\equiv y_{K-1}'+D_2$ (one can simply make payoffs be independent of the action of the other agents). Since $\mu_i\in\Mcal$, $B^{u_i',\mu_i}_{ia_{1}}(\sigma_{-i})=B^{u_i',\mu_i}_{ia_{K-1}}(\sigma_{-i})$. Moreover,
\[\begin{array}{l}B^{u_i',\mu_i}_{ia_{K-1}}(\sigma_{-i})=\mu_i\left(\{x\in X_{K-1}:x_{{K-1}}>x_{K}+D_2\}\right),\\
B^{u_i',\mu_i}_{ia_1}(\sigma_{-i})=\mu_i\left(\{x\in X_{1}:x_{{1}}> x_{K}+D_2\}\right).\end{array}\]
Thus,
\[\mu_i\left(\{x\in X_{K-1}:x_{{K-1}}>x_{K}+D_2\}\right)=\mu_i\left(\{x\in X_{1}:x_{{1}}> x_{K}+D_2\}\right).\]
Since $\mu_i\in\Bcal_i$ and $\mu_i(X_1)=\mu_i(X_{K-1})$,
\begin{equation}\mu_i\left(\{x\in X_{K-1}:x_{{K-1}}<x_{K}+D_2\}\right)=\mu_i\left(\{x\in X_{1}:x_{{1}}< x_{K}+D_2\}\right).\label{Eq:reverse1}\end{equation}
We claim that
\begin{equation}\begin{array}{l}\mu_i\left(\{x\in X_{1}:\max\{x_1-D_1,x_2,...,x_{K-2},x_K\}<x_{{K-1}}\}\right)\\=\mu_i\left(\{x\in X_{1}:\max\{x_1-D_1,x_2,...,x_{K-1}\}<x_{{K}}\}\right).\end{array}\label{Eq:equaldist1}\end{equation}

Consider  $u_i''\in\R^{A_i}$ for which $y''\equiv U_{u_i''}(\sigma_{-i},\cdot)$ is such that $y_1''<y_2''=\dots=y_{K}''\equiv y_{1}''+D_1$.  Observe that
\[B^{u_i'',\mu_i}_{ia_{K-1}}(\sigma_{-i})=\mu_i\left(\left\{x\in X_{1}:\max\{x_1-D_1,x_2,...,x_{K-2},x_K\}<x_{{K-1}}\right\}\right)+\mu_i(X_{K-1}),\]
and
\[B^{u_i'',\mu_i}_{ia_K}(\sigma_{-i})=\mu_i\left(\left\{x\in X_{1}:\max\{x_1-D_1,x_2,...,x_{K-1}\}<x_{{K}}\right\}\right)+\mu_i(X_{K}).\]
Since $\mu\in\Mcal$, $B^{u_i'',\mu_i}_{ia_K}(\sigma_{-i})=B^{u_i'',\mu_i}_{ia_{K-1}}(\sigma_{-i})$. Thus, since $\mu_i(X_{K-1})=\mu_i(X_{K})$, (\ref{Eq:equaldist1}) follows.

Since $D_2\geq D_1$, by monotonicity of measures with respect to set inclusion
\[\begin{array}{l}\mu_i\left(\{x\in X_{1}:x_{1}<x_K+D_2\}\right)\\
\geq \mu_i\left(\{x\in X_{1}:x_{1}<x_{K}+D_1\}\right)\\
\geq \mu_i\left(\{x\in X_{1}:\max\{x_1-D_1,x_2,...,x_{K-1}\}<x_{{K}}\}\right).\end{array}\]
Replacing (\ref{Eq:reverse1}) and (\ref{Eq:equaldist1}) in the first and last expressions of the inequality above yields,
\[\begin{array}{l}\mu_i\left(\{x\in X_{K-1}:x_{K-1}<x_K+D_2\}\right)\\
\geq \mu_i\left(\{x\in X_{1}:\max\{x_1-D_1,x_2,...,x_{K-2},x_K\}<x_{{K-1}}\}\right).\end{array}\]
Now,
\[\mu_i(G\cap X_{K-1})=\mu_i(X_{K-1})-\mu_i\left(\{x\in X_{K-1}:x_{K-1}<x_K+D_2\}\right),\]
and by monotonicity of measures with respect to set inclusion,
\[\begin{array}{rl}\mu_i(G\cap X_{1})&=\mu_i\left(\{x\in X_{1}:\max\{x_1-D_1,x_2,...,x_{K-2},x_K+D_2\}<x_{K-1}\}\right)
\\
&\leq\mu_i\left(\{x\in X_{1}:\max\{x_1-D_1,x_2,...,x_{K-2},x_K\}<x_{K-1}\}\right).\end{array}\]
Thus,
\[\mu_i(G\cap X_{1})+\mu_i(G\cap X_{K-1})\leq \mu_i(X_{K-1})=1/K.\]
Thus, by (\ref{Eq:g-expression1}),
\[B^{u_i,\mu_i}_{ia_{K-1}}(\sigma_{-i})=\mu_i(G)=\mu_i(G\cap X_{1})+\mu_i(G\cap X_{K-1})\leq1/K.\]
\end{proof}

\begin{proof}[Proof of Theorem~\ref{Thm:Paradox2}]Let $\Gamma(u)$ be the game in Table~\ref{Tab:Upsilon1}. By Lemma~\ref{Lm:EUpsilon1}, there is $\sigma^*\in \Nash(\Gamma(u))$ that belongs to the closure of $\{\gamma:(u,\gamma)\in\M\}$ in which each agent $j\neq 1$ plays the strictly dominant action and $\sigma_1^*(a_1)<1/K<\sigma_1^*(a_2)=\dots=\sigma_1^*(a_{K-1})<\sigma_1^*(a_{K})$. Thus, there is $\varepsilon>0$ for which for each $\sigma\in\Delta$ such that $||\sigma-\sigma^*||<\varepsilon$, $\sigma_1(a_1)<1/K<\sigma_1(a_2)$. Let $\sigma$ be such that $(u,\sigma)\in\M$ and $||\sigma-\sigma^*||<\varepsilon$. Since $\sigma_1(a_1)<1/K<\sigma_1(a_2)$ and $(u,\sigma)\in\M$, we have that $U_{u_i}(\sigma_{-i},a_1)<U_{u_i}(\sigma_{-i},a_2)$. Thus, $\sigma_{-i}(a^*_{-i},a_1)<1$, for otherwise $U_{u_i}(\sigma_{-i},a_1)=U_{u_i}(\sigma_{-i},a_2)$. Thus, $U_{u_i}(\sigma_{-i},a_2)-U_{u_i}(\sigma_{-i},a_1)=(1-\sigma_{-i}(a^*_{-i},a_1))>0$ and $U_{u_i}(\sigma_{-i},a_K)-U_{u_i}(\sigma_{-i},a_{K-1})=2(1-\sigma_{-i}(a^*_{-i},a_1))$. Thus,  $U_{u_i}(\sigma_{-i},a_K)-U_{u_i}(\sigma_{-i},a_{K-1})>U_{u_i}(\sigma_{-i},a_2)-U_{u_i}(\sigma_{-i},a_1)$. Thus, there is no $\mu\in\Mcal$ such that $\sigma\in (\Gamma(u),\mu)$, for otherwise by Proposition~\ref{Pro:k3-or-permut--nonmonot1}, $\sigma_1(a_2)\leq 1/K$. Thus,
$\{\sigma:||\sigma-\sigma^*||<\varepsilon\}\cap \{\sigma:\exists\mu\in\Mcal,\textrm{ s.t. }\sigma\in \BNE(\Gamma(u),\mu)\}=\emptyset$.
\end{proof}

\bibliography{ref-BC}

\end{document}